\DeclareMathOperator*{\argmaxA}{arg\,max}
\newtheorem{definition}{Definition}
\newtheorem{theorem}{Theorem}
\begin{document}
%
% paper title
% Titles are generally capitalized except for words such as a, an, and, as,
% at, but, by, for, in, nor, of, on, or, the, to and up, which are usually
% not capitalized unless they are the first or last word of the title.
% Linebreaks \\ can be used within to get better formatting as desired.
% Do not put math or special symbols in the title.
\title{Outage Detection in Partially Observable Distribution Systems using Smart Meters and Generative Adversarial Networks}
%
%
% author names and IEEE memberships
% note positions of commas and nonbreaking spaces ( ~ ) LaTeX will not break
% a structure at a ~ so this keeps an author's name from being broken across
% two lines.
% use \thanks{} to gain access to the first footnote area
% a separate \thanks must be used for each paragraph as LaTeX2e's \thanks
% was not built to handle multiple paragraphs
%

\author{Yuxuan~Yuan,~\IEEEmembership{Student Member,~IEEE,}
	Kaveh~Dehghanpour,~\IEEEmembership{Member,~IEEE,}
	Fankun~Bu,~\IEEEmembership{Student Member,~IEEE,}
	and Zhaoyu~Wang,~\IEEEmembership{Member,~IEEE}
\thanks{This work is supported by the U.S. Department of Energy Office of Electricity, and National Science Foundation under ECCS 1929975 (\textit{Corresponding author: Zhaoyu Wang.})

 Y. Yuan, K. Dehghanpour, F. Bu, and Z. Wang are with the Department of
Electrical and Computer Engineering, Iowa State University, Ames,
IA 50011 USA (e-mail: yuanyx@iastate.edu; wzy@iastate.edu).
 }
}
% note the % following the last \IEEEmembership and also \thanks - 
% these prevent an unwanted space from occurring between the last author name
% and the end of the author line. i.e., if you had this:
% 
% \author{....lastname \thanks{...} \thanks{...} }
%                     ^------------^------------^----Do not want these spaces!
%
% a space would be appended to the last name and could cause every name on that
% line to be shifted left slightly. This is one of those "LaTeX things". For
% instance, "\textbf{A} \textbf{B}" will typeset as "A B" not "AB". To get
% "AB" then you have to do: "\textbf{A}\textbf{B}"
% \thanks is no different in this regard, so shield the last } of each \thanks
% that ends a line with a % and do not let a space in before the next \thanks.
% Spaces after \IEEEmembership other than the last one are OK (and needed) as
% you are supposed to have spaces between the names. For what it is worth,
% this is a minor point as most people would not even notice if the said evil
% space somehow managed to creep in.

\markboth{Submitted to IEEE for possible publication. Copyright may be transferred without notice}%
{Shell \MakeLowercase{\textit{et al.}}: Bare Demo of IEEEtran.cls for Journals}

% The paper headers
% The only time the second header will appear is for the odd numbered pages
% after the title page when using the twoside option.
% 
% *** Note that you probably will NOT want to include the author's ***
% *** name in the headers of peer review papers.                   ***
% You can use \ifCLASSOPTIONpeerreview for conditional compilation here if
% you desire.

% If you want to put a publisher's ID mark on the page you can do it like
% this:
%\IEEEpubid{0000--0000/00\$00.00~\copyright~2014 IEEE}
% Remember, if you use this you must call \IEEEpubidadjcol in the second
% column for its text to clear the IEEEpubid mark.

% use for special paper notices
%\IEEEspecialpapernotice{(Invited Paper)}

% make the title area
\maketitle

% As a general rule, do not put math, special symbols or citations
% in the abstract or keywords.
\begin{abstract}
In this paper, we present a novel data-driven approach to detect outage events in partially observable distribution systems by capturing the changes in smart meters' (SMs) data distribution. To achieve this, first, a breadth-first search (BFS)-based mechanism is proposed to decompose the network into a set of \textit{zones} that maximize outage location information in partially observable systems. Then, using SM data in each zone, a generative adversarial network (GAN) is designed to implicitly extract the temporal-spatial behavior in \textit{normal conditions} in an unsupervised fashion. After training, an anomaly scoring technique is leveraged to determine if real-time measurements indicate an outage event in the zone. Finally, to infer the location of the outage events in a multi-zone network, a zone coordination process is proposed to take into account the interdependencies of intersecting zones. We have provided analytical guarantees of performance for our algorithm using the concept of \textit{entropy}, which is leveraged to quantify outage location information in multi-zone grids. The proposed method has been tested and verified on distribution feeder models with real SM data.
\end{abstract}

% Note that keywords are not normally used for perreview papers.
\begin{IEEEkeywords}
Outage detection, generative adversarial networks, zone, partially observable system, smart meter.
\end{IEEEkeywords}

\section{Introduction}\label{introduction}
Outage detection is a challenging problem in power systems, especially in distribution networks where the majority of outage events take place. According to the statistical data provided by the U.S Energy Information Administration (EIA), each customer lost power for around 4 hours on average in 2016 \cite{EIAoutage}. To decrease outage duration, and improve system reliability and customer satisfaction, distribution system operators (DSOs) deploy state-of-the-art outage management systems (OMS), using modern software tools and protection devices with bidirectional communication function. This allows DSOs to collect real-time up-to-the-second data from the network \cite{RA2018}. Nevertheless, use of intelligent communication-capable devices in distribution systems has not become prevalent, mostly due to budgetary limitations of utilities \cite{ST2018}. Hence, identification of distribution system outage events, especially for small utilities, still relies on trouble calls from customers and manual inspection. However, trouble calls alone are not a reliable data source of outage detection because customers may not make prompt calls to utilities \cite{HSF2016}. Also, conventional expert-experience-based outage discovery methods that use customer calls are laborious, costly, and time-consuming \cite{FCL2014}. 

In recent years, a number of papers have explored data-driven alternatives for outage detection. According to the type of data source, the previous works in this area can be classified into two groups: \textit{Class I - Smart meter (SM)-based methods}: With the widespread deployment of advanced metering infrastructure (AMI), SMs provide an opportunity to rapidly detect outage events by recording the real-time demand consumption and automatically sending ``last gasp" signals to the utilities. In \cite{ZSH2018}, a multi-label support vector machine classification method is presented that utilizes the last gasp signals of SMs to detect and find the locations of damaged lines in fully observable networks. In \cite{RM2018}, a hierarchical framework is developed to provide anomaly-related insights using multivariate event counter data collected from SMs. In \cite{SJC2015}, a fuzzy Petri nets-based approach is proposed to detect nontechnical losses and outage events by tracking the differences between profiled and irregular power consumption. In \cite{KS2001}, a probabilistic and fuzzy model-based algorithm is presented to process outage data using AMI. In \cite{RA2001}, a tree-based polling algorithm is developed to obtain information about the system conditions by polling local SMs. \textit{Class II - non-SM-based methods}: Other data sources have been used in the literature for outage detection, as well. In \cite{RA2018}, a hypothesis testing-based outage detection method is developed combining the use of real-time power flow measurements and load forecasts of the nodes. In \cite{HSF2016}, a social network-based data-driven method is proposed by leveraging real-time information extraction from Twitter. In \cite{PK2014}, a new boosting algorithm is developed to estimate outages in overhead distribution systems by utilizing weather information.

Even though previous works provide valuable results, critical questions remain unanswered in this area. The limitation of most Class I models is their basic assumption that the distribution system is \textit{fully observable}, i.e., all the nodes have measurement devices. However, this assumption does not necessarily apply to practical systems, in which large portions of customers do not own smart meters \cite{ZSH2018}. On the other hand, Class II methods are generally based on several limiting assumptions, such as availability of accurate forecasts for customer loads, availability of real-time power flow measurements, and reliability of social network data. Another difficulty in outage detection is \textit{outage data scarcity}, which means that the size of the outage data is far smaller compared to the data in normal conditions. This issue causes a \textit{data imbalance problem} that could hinder reliable training of supervised learning-based outage detection models \cite{VC2007}.

To address these shortcomings, in this paper, a generative adversarial network (GAN)-based method is developed to detect power outages in partially observable distribution systems by capturing the anomalous changes in SMs' measurement data distributions that are caused by outage events \cite{GAN2014}. Compared to the previous works, the proposed method solves three fundamental challenges in outage monitoring for partially observable distribution systems: 1) Unlike supervised classifiers that can fail in case of outage data scarcity, the proposed generative model follows an \textit{unsupervised learning} style which only relies on the operation data in normal conditions for model training. Then, a GAN-based anomaly score is defined to quantify the deviations between the learned distribution and the real-time measurements to detect potential outage events, i.e. new observations with high anomaly scores imply outage \cite{GAN2017}. 2) Due to the temporal variability of AMI data, efficient outage detection requires capturing high-dimensional temporal-spatial relationships in measurement data. Conventional data distribution estimators are limited by the high-dimensional nature of the data. Instead of constructing a complex data likelihood function explicitly, our approach trains GANs to implicitly extract the underlying distribution of the data. Each GAN consists of two interconnected deep neural networks (DNNs) \cite{AC2018}. 3) Considering the partial observability of real systems, we have proposed a breadth-first search (BFS)-based mechanism to decompose large-scale distribution networks into a set of intersecting \textit{zones} \cite{BFS}. Each zone is determined by two neighboring observable nodes of the network (i.e. nodes with known voltages and demands) and contains only a subset of network branches. A separate GAN is trained in each zone using the time-series data of the two observable nodes. Since sectionalizing networks into multiple zones can be done in more than one way depending on the choice of observable nodes, it is necessary to find the optimal set of zones. Our BFS-based approach optimizes the zone selection and anomaly score coordination process and achieves maximum outage location information. To demonstrate this, we have proposed an outage detection metric based on the information-theoretic concept of \textit{entropy} to quantify outage location information. The proposed outage detection methodology has been tested and verified using real AMI data and network models.

%The rest of this paper is organized as follows: Section II introduces the zone definition and real dataset. In Section III, a GAN-based method is utilized to estimate the distribution of normal status and classify the target events online. In Section IV, a multi-zone outage detection approach is presented. The numerical results are analyzed in Section V. Section VI concludes the paper. 

% \begin{figure}[tbp]
% 	\centering
% 	\includegraphics[width=3in]{}
% 	\caption{The flowchart of the proposed model.}
% 	\label{fig:flow}
% \end{figure}

\section{Real Data Description and Zone Selection}\label{framework}
\subsection{AMI Data Description}
The available AMI historical data used in this paper contains several U.S. mid-west utilities' hourly energy consumption data (kWh) and voltage magnitude measurements of over 6000 customers \cite{website}. The dataset includes around four years of measurements, from January 2015 to May 2018. Over 95\% of customers are residential and commercial loads in the dataset. The hourly data was initially processed to remove bad and missing data caused by communication error.

\subsection{Outage Detection Zone Definition}

When an outage happens in a radial system, a protective device isolates the faulted area along with the loads downstream of the fault location \cite{RA2018}. This will cause the measurement data samples from unfaulted upstream observable nodes to deviate from the data distribution in normal condition. In this paper, we exploit this phenomenon to define an outage detection zone. 
\begin{figure}[tbp]
	\centering
	\includegraphics[width=3.5in]{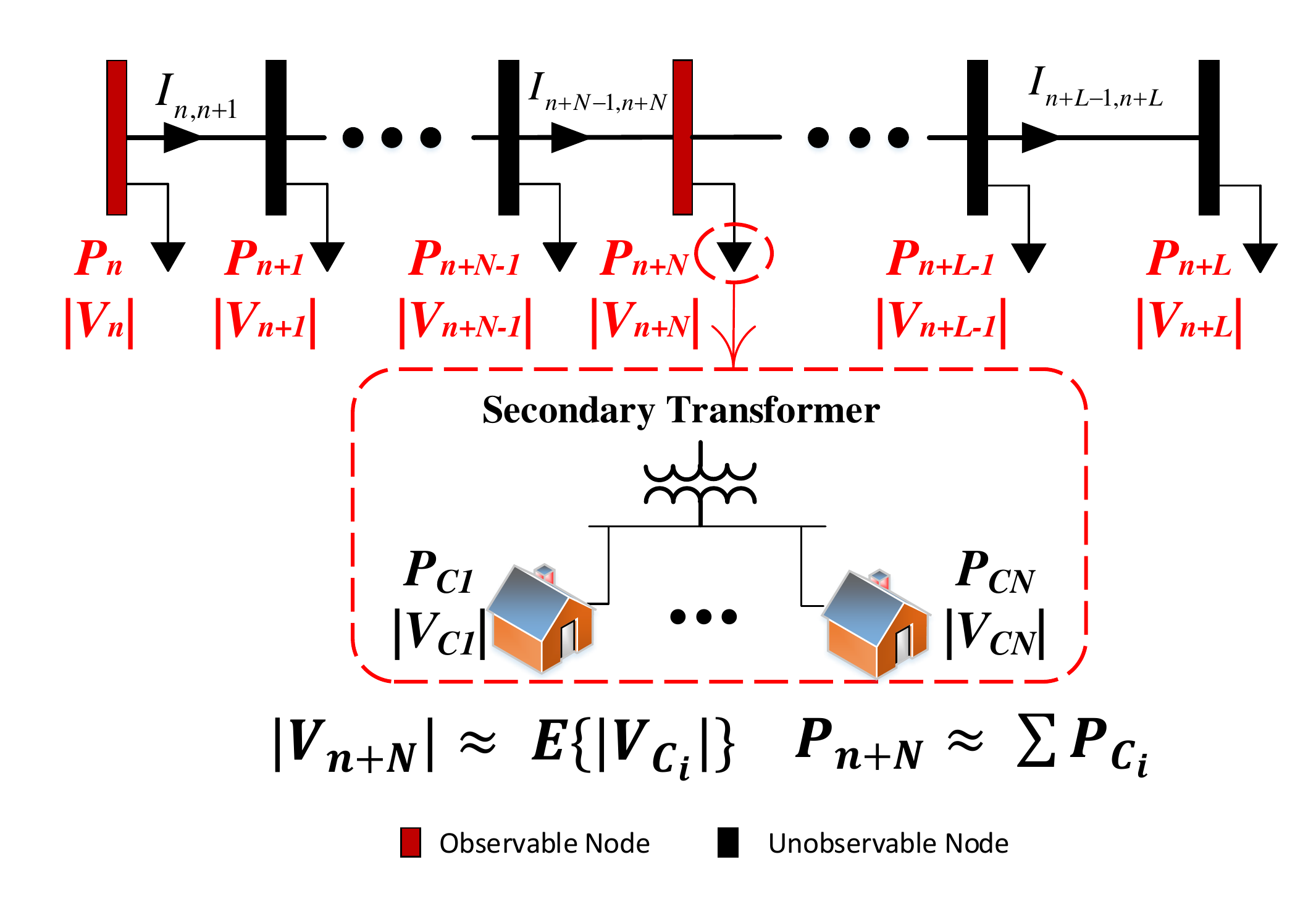}
	\caption{Example zone in normal condition.}
	\label{fig:outage_inside}
		\vspace{-1em}
\end{figure}
\begin{figure}[tbp]
	\centering
	\includegraphics[width=3.5in]{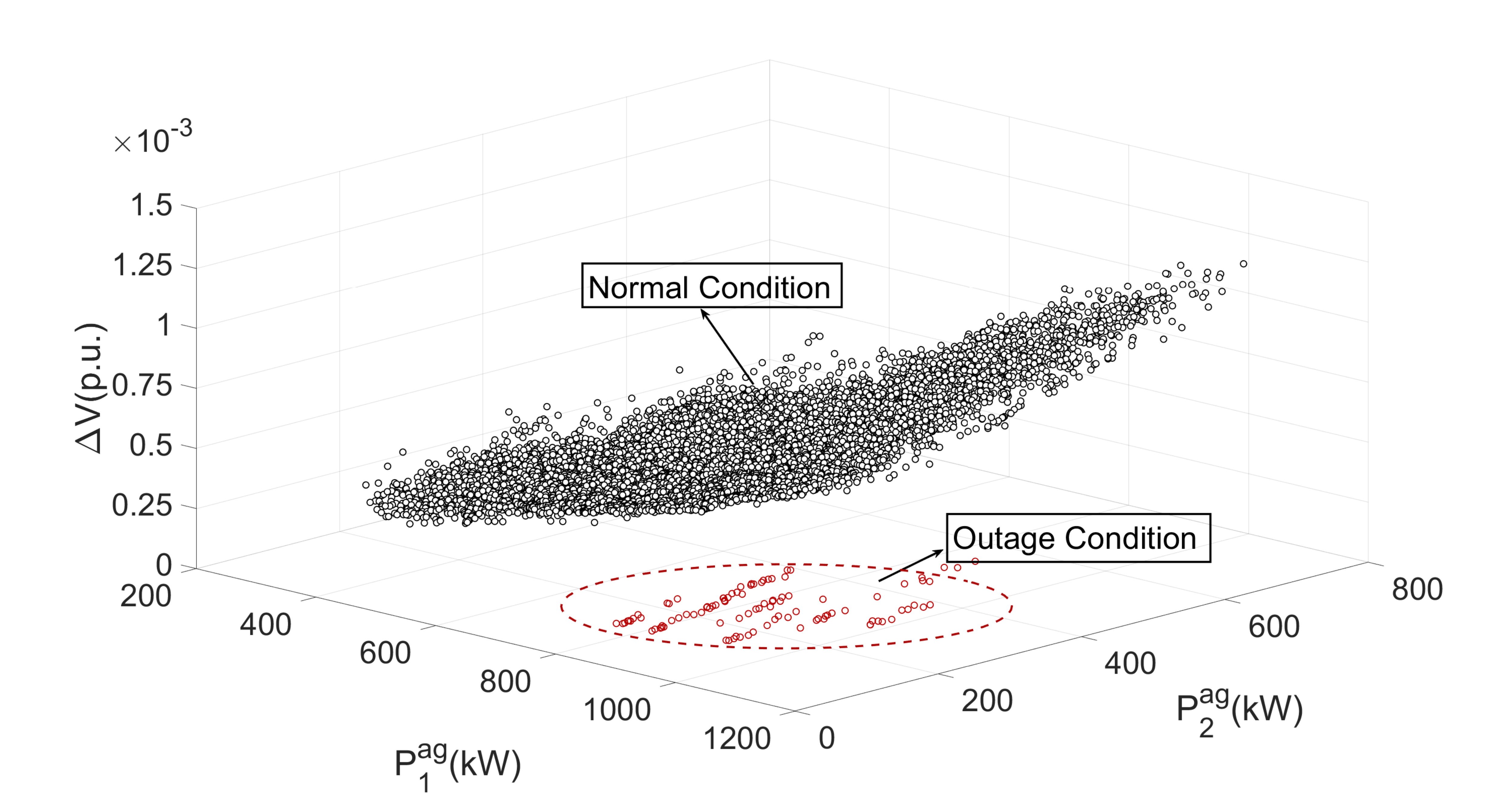}
	\caption{Joint data distribution under normal and outage conditions.}
	\label{fig:3d}
		\vspace{-1em}
\end{figure}
In general, two observable nodes (i.e. nodes with AMI-based measured voltage magnitudes and power consumption) on the same path can be utilized to define an outage detection zone. To show this, Fig. \ref{fig:outage_inside} presents a typical distribution feeder with two observable nodes, node $n$ and node $n+N$. Given the radial structure of the feeder, the voltage drop, $\Delta{V}$, between nodes $n$ and $n+N$ can be expressed as \cite{KW2016}:
\begin{equation}
\label{eq:normal}
\Delta{V}=|V_{n}|-|V_{n+N}|\approx\sum_{i=n+1}^{n+N}I_{i-1,i}\cdot \mathbf{Z_{(i-1,i),abc}}
\end{equation}
where, $|V_n|, |V_{n+N}|$ are the voltage magnitude measurements of the observable nodes, $I_{i-1,i}$ and $\mathbf{Z_{(i-1,i),abc}}$ are the branch current and the phase impedance matrix between bus $i-1$ and $i$. The above equation can be rewritten in terms of nodal power measurements, as follows \cite{KW2016}:
\begin{equation}
\label{eq:normal_real}
\Delta{V}\approx{\sum_{i=n+1}^{n+N}\sum_{j=i}^{n+N}K_{i-1,i}\cdot l_{i-1,i}\cdot \frac{P_j}{cos\phi_j}}
\end{equation}
where, $K_{i-1,i}\ [\frac{\% drop}{kVA \cdot mile}]$ and $l_{i-1,i}$ are the approximate voltage drop factor and the length of distribution line segment between nodes $i-1$ and $i$, $P_j$ and $\cos\phi_j$ represent the nodal power consumption and the power factor at node $j$. When outage happens at an unobservable node $s$ downstream of node $n$, $n\leq s \leq n+L$, the post-outage voltage drop value, $\Delta{V_{o}}$, is determined as follows:
\begin{equation}
\label{eq:outage_real}
\Delta{V_{o}}\approx{\Delta{V}+\sum_{i=n}^{s-1}K_{i-1,i}\cdot l_{i-1,i}\cdot\frac{\Delta P_s}{cos\phi_s}}
\end{equation}
where, $\Delta P_s$ represents the outage event magnitude and has a negative value. Comparing \eqref{eq:outage_real} with \eqref{eq:normal_real}, we can observe that the voltage drop value across the two observable nodes changes after an outage event downstream of any of the two nodes. These changes are almost proportional to the outage magnitude, $\Delta{P_s}$. This can also be confirmed using real AMI data, as shown in Fig \ref{fig:3d}. This figure shows the perceivable gap between the joint data distribution obtained from two observable nodes under normal and outage conditions, in three dimensions. Given that an outage event anywhere downstream of the two nodes will lead to deviations from their underlying joint measurement data distribution in normal operations, we define an outage detection zone as follows:
\begin{definition}\label{def}
    In a radial network, an outage detection zone, $\Psi_i$, is defined as $\Psi_i = \{\omega_1,\omega_2,Z_{\Psi_i}\}$ where $\omega_1$ and $\omega_2$ are two observable nodes, with $\omega_1$ being upstream of $\omega_2$, and $Z_{\Psi_i}$ is the set of all the branches downstream of $\omega_1$.
\end{definition}

\subsection{Zone Selection}
\begin{figure}[tbp]
      \centering
      \includegraphics[width=3.3in]{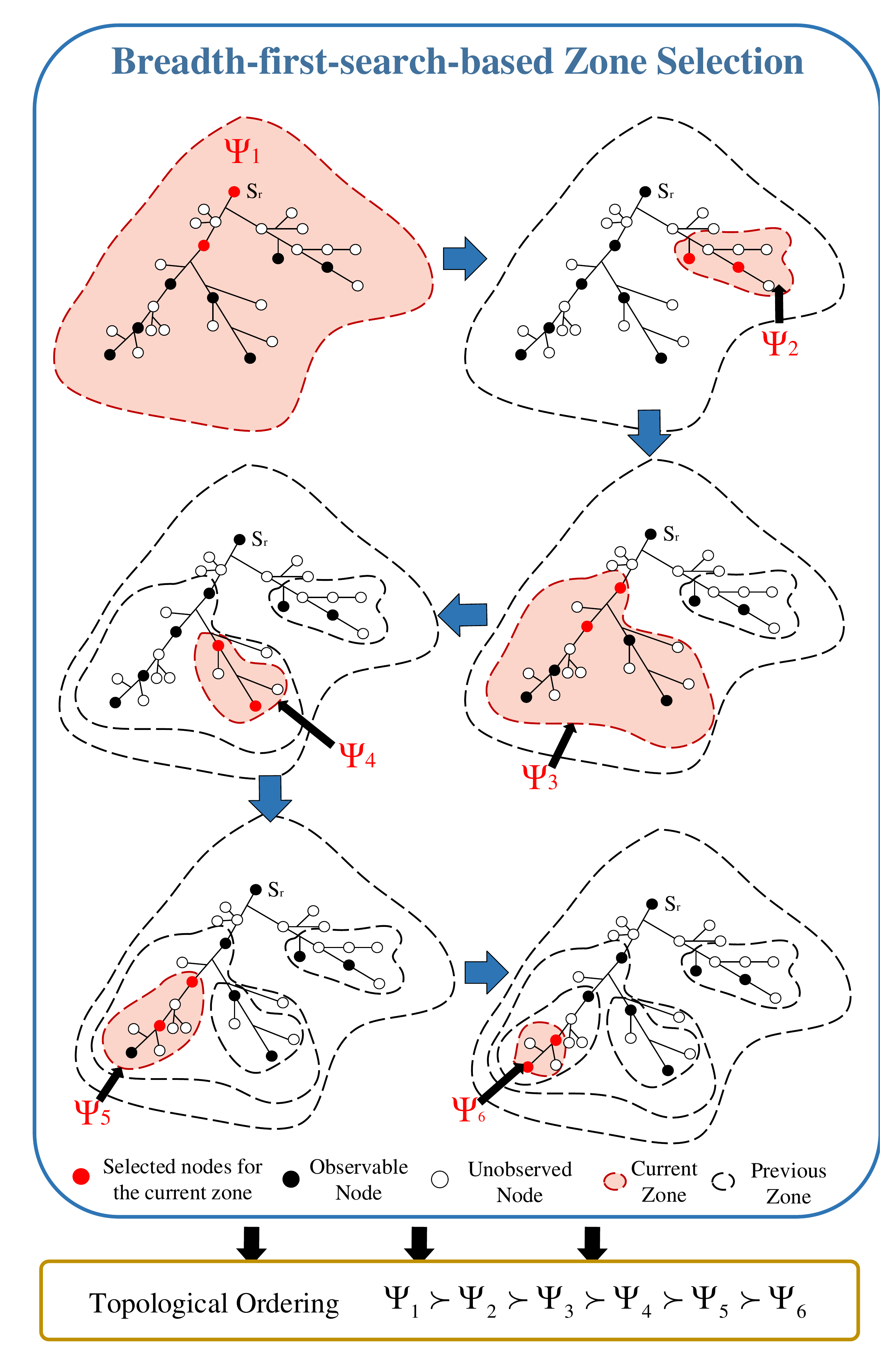}
\caption{Proposed BFS-based zone selection and ordering method.}
\label{fig:zone}
	\vspace{-1.5em}
\end{figure}

Based on Definition \ref{def}, each network can have different sets of zones based on the choice of observable nodes. In this paper, we propose a BFS-based zone selection method by exploiting the tree-like structure of distribution systems. As will be elaborated in Section \ref{zone}, the proposed zone selection algorithm offers two advantages: (1) it is able to obtain the \textit{optimal zone set}, which maximizes the outage location information in any partially observable network. (2) The proposed BFS-based algorithm introduces a \textit{valid topological ordering}, which significantly simplifies outage location identification process. The proposed algorithm involves the following steps: 

\begin{itemize}
\item \textbf{Step I:} Consider a partially observable distribution system, $g$, with a total number of $M$ branches, $B_g = \{b_1,...,b_M\}$, and a set of $O+1$ observable nodes, $S_g = \{S_r,S_1,S_2,...,S_O\}$, where $S_r$ represents the network's root node (i.e. main substation). 
\item \textbf{Step II:} Define and initialize the zone set for $g$, as $\Psi^g = \{\emptyset\}$. Note that the set $\Psi^g$ is an \textit{ordered set}, where new elements are added to the right side of the current elements in the set (i.e. order of elements matters). Initialize the set of candidate observable nodes as $S_B = \{S_r\}$, and the zone counter $k \gets 1$.
\item \textbf{Step III:} Select a node, $S_{o1}$, randomly from $S_B$. Remove $S_{o1}$ from $S_B$. Find all immediate observable nodes downstream of $S_{o1}$, denoted as $S_N$, and add them to $S_B$. Randomly select a node from the set $S_N$, denoted as $S_{o2}$.
\item \textbf{Step IV:} Select a new zone $\Psi_{k}$, with $\omega_1 \leftarrow S_{o1}$, $\omega_2 \leftarrow S_{o2}$, and include all the branches downstream of $S_{o1}$ into $Z_{\Psi_k}$ (see Definition I). Add $\Psi_k$ to the right side of the current zones in $\Psi^g$. 
\item \textbf{Step V:} $k \gets k+1$. Go back to Step II until $S_N$ is empty for all the nodes in $S_B$, as shown in Fig. \ref{fig:zone}. 
\item \textbf{Step VI:} Output the ordered set of all network zones, $\Psi^g = \{\Psi_1,...,\Psi_w\}$, with $w$ denoting the number of selected zones.
\end{itemize}

Following the proposed zone selection method, each branch in the system will belong to at least one zone, while at the same time, no two zones have the exact same set of branches. For example, branches of the zone $\Psi_6$ in Fig. \ref{fig:zone}, are also covered by zones $\Psi_1,...,\Psi_5$. As will be shown in Section \ref{zone}, these inter-zonal intersections introduce a \textit{redundancy}, which will be leveraged for enhancing the robustness of the outage detection process by blocking bad data samples and outliers. Furthermore, to specify the outage location considering the zonal intersections, a zone coordination method is proposed in Section \ref{GAN}.

%In this paper, ``observed nodes'' refer to the primary distribution network and represent secondary transformers with at least one observed customer. Hence, the measurements of observed customers are transferred to the primary feeder using a two-step aggregation process: first, the power measurements of customers with SM are aggregated at the secondary transformer level. Second, given the small voltage drops on secondary networks, the voltage measurements of the secondary transformers are approximated as the average value of the available secondary voltage measurements at each time point.

\begin{figure*}[tbp]
      \centering
      \includegraphics[width=2\columnwidth]{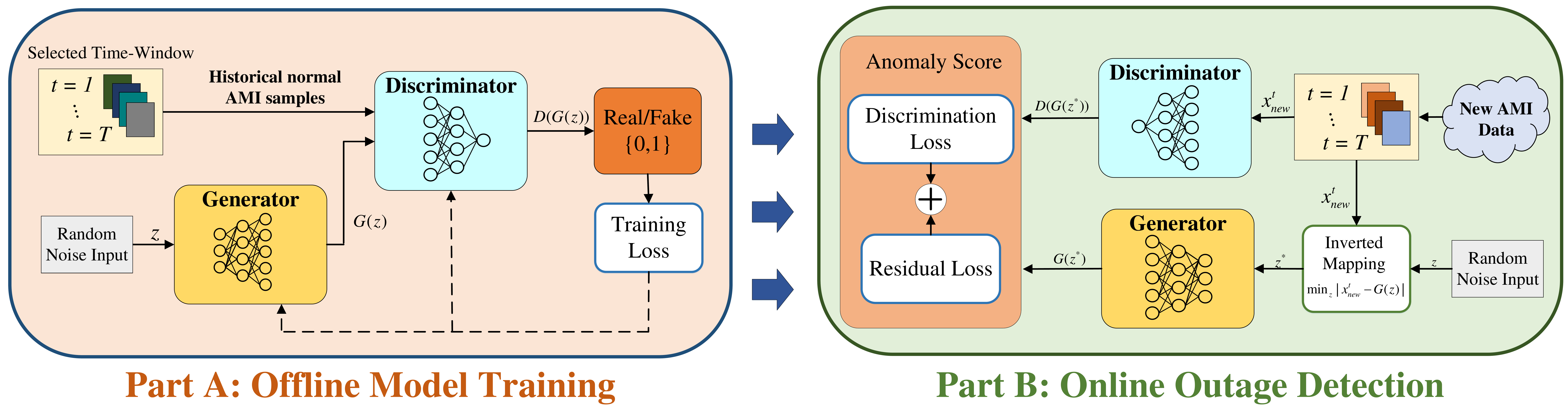}
\caption{GAN-based learning and testing structure.}
\label{fig:overall}
	\vspace{-1em}
\end{figure*}

\section{GAN-Based Zone Monitoring}\label{GAN}
\begin{algorithm}
\caption{GAN Training for zone $\Psi_i$}\label{alg:GAN}
\begin{algorithmic}[1]
\Require: {Seasonal normal behavior data for zone $\Psi_i$}
\Require: {Learning rate $\alpha$, batch size $m$, number of iterations for $D$ per $G$ iteration $n_D$, initial learning parameters for $G$ and $D$, $\theta_D$ and $\theta_G$}
    \While {Nash equilibrium has not been achieved}
        \For {$t = 0,...,n_D$}
            \State {Generate sample batch from the latent space $z$}
            \State {$p_z \to \{(z_j)\}^m_{j=1}$}
            \State {Obtain sample batch from the historical data}
            \State {$p_{X_{\Psi_i}} \to \{x_{\Psi_i}(j)\}^m_{j=1}$}
            \State {Update discriminator parameters using gradient descent with $\alpha$ based on the discriminator loss}
            \State {$\delta_D = \frac{1}{m}\sum_{j=1}^m[-\log{D(x_{\Psi_i}(j))}-\log(1-D(G(z_i)))]$}
            \State {$\theta_D := \theta_D - \alpha * \bigtriangledown_{\theta_D}\delta_D$}
        \EndFor    
        \State {Update generator parameters using gradient descent with $\alpha$}
        \State {$\delta_G = \frac{1}{m}\sum_{j=1}^m[-\log{D(G(z_j))}]$}
        \State {$\theta_G := \theta_G - \alpha * \bigtriangledown_{\theta_G}\delta_G$}
    \EndWhile
\end{algorithmic}
\end{algorithm}
In this paper, to quantify deviations from the measurement data distribution in normal conditions, we have utilized a recently-invented non-parametric unsupervised learning approach, GAN, which is able to implicitly represent complex data distributions without constructing high-dimensional likelihood functions \cite{GAN2016}. This addresses the challenge of dimensionality. Also, GAN does not assume a prior parametric structure over the data distribution. This ensures the performance of GAN for outage detection problem, since the utilities generally do not have a priori knowledge of the exact structure of data distribution in normal conditions. Meanwhile, since model training is done using only the data from normal condition, GAN is not vulnerable to the outage data scarcity problem. When training is completed, the data distributions of the zones in normal condition are represented by DNNs. Then, a GAN-based anomaly score is assigned to real-time measurements to detect outage events inside the zone \cite{GAN2017}.  

\subsection{GAN Fundamentals and Training Process}

For each zone, a GAN is trained to learn the joint distribution of measured variables $X = \{\Delta V^t,P_n^t,P_{n+N}^t\}_{t=1}^{T}$ within a time-window with length $T$ (see Fig. \ref{fig:outage_inside}), where $P_n^t$ and $P_{n+N}^t$ are the nodal power consumption for the two observable nodes in the zone, and $\Delta V^t$ is the voltage difference between the two nodes at time $t$. The purpose of defining a time-window over the observable variables is to exploit temporal relations between consecutive data samples in power distribution systems for more effective anomaly detection. The training set consists of the SM data history of the variables defined in each zone, and is denoted as $X_{\Psi_i}$ for zone $\Psi_i$. To account for the strong seasonal changes in customers' behavior that might mislead detecting the boundary between normal and outage behavior \cite{kaveh2019}, the dataset has been decomposed into separate seasons to train different GAN models for each zone. Each dataset is randomly divided into three separate subsets for training (70$\%$ of the total data), validation (15$\%$ of the total data), and testing (15$\%$ of the total data).

GAN relies on two interconnected DNNs, which are simultaneously trained via an adversarial process: a \textit{generator}, $G$, and a \textit{discriminator}, $D$ \cite{GAN2015}, as shown in Fig. \ref{fig:overall} (part A). The interaction between the two DNNs can be modeled as a game-theoretic two-player nested minmax optimization \cite{GAN2014}:
\begin{equation}
\begin{split}
\min \limits_{\theta_G} \max \limits_{\theta_D} V(D,G) & = \mathbb{E}_{x_{\Psi_i}\sim p_{X_{\Psi_i}}(x_{\Psi_i})}[\log(D(x_{\Psi_i}))] \\
&+\mathbb{E}_{z \sim p_z(z)}[\log(1-D(G(z)))] 
\end{split}
\end{equation}
where, $\theta_G$ and $\theta_D$ are the learning parameters of $G$ and $D$, respectively. $p_{X_{\Psi_i}}$ is the underlying probability density function of historical data obtained from the two observable nodes of the zone. In each iteration, $D$ is trained to maximize the probability of assigning the correct label to both training examples and artificially generated samples from $G$. Thus, the output of $D$, $0\leq D(x_{\Psi_i}) \leq 1$, represents the probability that $x_{\Psi_i}$ is from the training dataset rather than generated artificially by $G$ \cite{GAN2014}. On the other hand, $G$ is trained to generate artificial samples that maximize the probability of the discriminator $D$ mislabeling. The input of $G$ is defined as $z$, which is a noise signal with uniform distribution $p_{z}(z)$. A minibatch stochastic gradient descent is applied for training the GAN by updating the G-D model parameters cooperatively. After a number of training iterations, $G$ and $D$ will reach a unique global optima at which both cannot improve. This means the generator can recover the underlying distribution of the training data and the discriminator cannot distinguish the true samples from the artificially generated samples \cite{YZC2018}. The training process takes place offline and the detailed procedure is presented in Algorithm \ref{alg:GAN}.

\subsection{GAN-based Anomaly Score Assignment}
To detect potential outage events in each zone, a GAN-based anomaly score is utilized to evaluate sequential measurements of SMs online \cite{GAN2017}, as shown in Fig. \ref{fig:overall} (part B). The anomaly score consists of two loss metrics: the residual loss $\delta_R(\cdot)$ and the discriminator loss $\delta_D(\cdot)$. When a new data inquiry $x_{new}^t$ is obtained at time $t$, the residual loss describes the extent to which $x_{new}^t$ follows the learned distribution of the $G$ model, in the best case \cite{GAN2017}:
\begin{equation}\label{Rloss}
\delta_R(x_{new}^t) = \min_{z}|x_{new}^t-G(z)|
\end{equation}
After training, the generator, $G$, has learned an almost perfect mapping from the latent space $z$ to the zonal measurement data distribution in normal conditions. Hence, if $x_{new}^t$ is obtained from normal conditions, its residual loss value is zero, $\delta_D(x_{new}^t) = 0$, since $x_{new}^t$ and $G(z^*)$ are identical, where $z^*$ is the optimal solution to \eqref{Rloss}. Thus, higher $\delta_R(x_{new}^t)$ values represent deviations from normal operation conditions, suggesting occurrence of outage event within the zone.

The discriminator loss, $\delta_D(x_{new})$, is defined using the trained discriminator, $D$, to measure how well $G(z^*)$ follows the learned data distribution by the $G$ model. The discriminator loss can be written as \cite{GAN2014}:
\begin{equation}
\delta_D(x_{new}^t)=-\log{D(x_{new}^t)}-\log(1-D(G(z^*)))
\end{equation}
The GAN-based anomaly score for zone $\Psi_i$ is defined as the weighted sum of both loss metrics \cite{GAN2017}:
\begin{equation}\label{score}
\zeta_{\Psi_i}(x_{new}^t)=(1-\lambda)*\delta_R(x_{new}^t)+\lambda*\delta_D(x_{new}^t)
\end{equation}
where, $0\leq\lambda\leq1$ is a user-defined weight factor, the value of which is set at $\lambda = 0.1$ in this paper, based on suggestions in the literature \cite{GAN2017}. To determine the critical threshold for the anomaly score, above which new data points are identified as outage events, the GAN-based anomaly score, $\zeta_{\Psi_i}$, is obtained for all training data samples of zone $\Psi_i$. The sample mean, $\mu_{\Psi_i}$ and the sample variance, $\sigma_{\Psi_i}$, of the anomaly scores for the training data samples are calculated to determine the range of anomaly score in normal operations. When outage occurs, the real-time measurement data samples are expected to have anomaly scores above this range. The details of anomaly identification process are elaborated in the next section.

\subsection{GAN-based Zone Coordination}
Using the trained GANs, outage events can be detected in each zone by comparing the anomaly scores between the new inquiry samples and the critical threshold. Considering that a zone consists of a number of branches, a high anomaly score simply implies outage somewhere in the zone. To accurately pinpoint outage location in a large-scale distribution system, it is necessary to coordinate and combine anomaly scores from multiple zones. To achieve this, the following steps are performed:  

\begin{itemize}
\item \textbf{Stage I:} Assign a GAN to each zone, $\Psi_i\in\Psi^g$ and use Algorithm \ref{alg:GAN} over the historical seasonal data of the two observable nodes of each zone to learn the joint distribution of the measurement data. 
\item \textbf{Stage II:} After training for each zone, $\Psi_{i}$, obtain the anomaly score for training samples in the zone; determine the anomaly score sample mean and sample variance, denoted as $\mu_{\Psi_{i}}$ and $\sigma_{\Psi_{i}}$, respectively.
\item \textbf{Stage III:} At time $T$, observe the anomaly scores of all the zones in the set $\Psi^g$ based on the latest real-time measurements.
\item \textbf{Stage IV:} Select the first zone from the right side of the set $\Psi^g$ that has an abnormal anomaly score value and denote it as $\Psi_a$. We will show that this zone contains the maximum information on the outage event in Section \ref{zone}. In other words, $a = \argmaxA_{\xi}\xi,\ s.t.\ \zeta_{\Psi_\xi}>\mu_{\Psi_{\xi}} + h\cdot\sigma_{\Psi_{\xi}}$, where, $h$ is a user-defined threshold factor.
\item \textbf{Stage V:} Output the set of candidate branches that are potentially the location of outage event as $B_c = \Psi_a \setminus \{\Psi_{a+1}\cup \Psi_{a+2}\cup ...\cup \Psi_w\}$, where $A\setminus B$ represents the elements of set $A$ that are not in set $B$.
\end{itemize}

Based on the outcome of zone coordination, DSO can obtain the minimum branch candidates that are potentially impacted by the outage. This process will help the repair crew to rapidly find the outage location. Note that given the unbalanced nature of distribution networks, the proposed algorithm is applied to each phase separately. Hence, in practice, the zone set needs to be obtained for three phases. For the sake of conciseness we will continue our discussions for one phase, keeping in mind that the same logic applies to the other phases as well.

\section{Theoretical Properties of the Proposed Framework}\label{zone}
In this section, we discuss the theoretical properties of the proposed outage-detection framework. We will show that this approach has three fundamental properties: 

\textit{Framework Property 1 - Valid Topological Ordering of the Zones:} The framework introduces a \textit{valid topological order} among the zones to simplify the outage location process for large-scale networks. A valid topological order for any pair of zones is a relationship denoted as $\Psi_i \succ \Psi_j$, indicating that $\Psi_i$ has a higher topological order than $\Psi_j$. This means that $\Psi_i\not\subset \Psi_j$; i.e. either all branches in $\Psi_j$ are located downstream of the branches of $\Psi_i$ or the branches of $\Psi_i$ and $\Psi_j$ do not share any common path starting from the network's root node. Note that $\Psi^g = \{\Psi_1, ... , \Psi_w\}$ obtained from the proposed BFS-based zone selection algorithm follows a valid topological order, meaning that $\Psi_1\succ...\succ \Psi_w$. The reason for this is that the proposed zone selection algorithm explores all the immediate downstream nodes at the each depth level without backtracking in Stage II (Section \ref{framework}), prior to moving to the next level. %\hl{The importance of a valid topological order is that it helps the DSO directly identify the zone with maximum amount of information on outage location (i.e. $\Psi_a$) without the need for polling the data obtained from all available zones, which can be computationally burdensome.}

To show this, note that when an outage event happens the anomaly scores for a subset of zones, ${\Psi^g}$, will increase above their normal range, where due to the radial structure of the networks these zones will follow a relationship of the form $\Psi_1\supset \Psi_2\supset ... \supset \Psi_{v_O}$, with $v_O$ denoting the number of the zones containing the faulted branch. Thus, the zones within $\Psi_g$ that are impacted by outage also follow a valid topological order. At Stage IV (Section \ref{GAN}), the proposed zone coordination algorithm selects $\Psi_{v_o}\leftarrow \Psi_{a}$ (i.e. the zone with the lowest topological order) as the zone that has the most specific information on the location of outage among all the impacted zones, since it contains the least number of candidate branches. Hence, higher order zones on the same path with abnormal anomaly scores, which are supersets of the selected zone and have less information on outage location, are automatically ignored. This eliminates the need for a burdensome comprehensive search process. Finally, to infer the candidate branches that are potentially the location of the outage event, all the branches in the healthy zones with lower topological orders than $\Psi_{v_O}$ have to be removed, as shown in Step IV (Section \ref{framework}). This helps the operator to directly pick the smallest set of branches among thousands of candidate branches in a large-scale network. For example, when outage occurs in any branches within $\Psi_6$ in Fig. \ref{fig:zone}, the DSO can ignore the anomaly scores of zones that have a higher topological ordering (i.e. $\Psi_1,...,\Psi_5$) to directly infer outage location as $\Psi_a \leftarrow \Psi_6$.

\textit{Framework Property 2 - Maximum Outage Location Information Extraction:} The proposed algorithm is able to obtain the optimal zone set as it maximizes the amount of information on the location of outage events in partially observable systems. To show this, first, we leverage the concept of entropy to assess the amount of outage location information in $\Psi^g$. The set $\gamma^g(b_j)$ is defined as $\gamma^g(b_j) = \{\forall \Psi_i:\ b_j \in \Psi_i\, \forall \Psi_i\in\Psi^g\}$. Hence, $\gamma^g(b_j)$ is the set of all zones in $\Psi^g$ that include $b_j$. Based on this definition, for each $\Psi^g$, a set of \textit{undetectable branch sets} is defined as $U(\Psi^g) = \{u_1,...,u_V\}$, where $u_k = \{b_{k_1},...,b_{k_n}: \forall b_{k_i}, b_{k_j}, \gamma^g(b_{k_i}) = \gamma^g(b_{k_j})\}$. Thus, $u_k$ defines a set of branches that are covered with the exact same set of zones and cannot be distinguished from each other in terms of outage event location. Given the set $U(\Psi^g)$ the outage location information can be measured using the concept of \textit{entropy}, as follows \cite{entropy2006}:
\begin{equation}
\label{eq:entropy}
H(U(\Psi^g))=-\sum_{i=1}^V \frac{|u_i|}{M} \log\frac{|u_i|}{M}
\end{equation}
where $|u_i|$ is the cardinality of the set $u_i$. The higher entropy value implies a higher number of distinguishable branches, and consequently, more information on outage location. The theoretical upper boundary for the entropy is $log(M)$; this case only happens when each $u_k$ only includes a single branch and $V = M$ (i.e. all branches are fully distinguishable and $|u_i| = 1$). This indicates any individual branch is distinguishable using two zones that intersect exactly at that branch. The theoretical lower boundary value for the entropy is zero, which implies that all the branches are covered by identical set of zones (i.e. the branches are not distinguishable and $|u_i| = M$). Based on this metric, the following theorem and proof are obtained:
\begin{figure}[tbp]
	\centering
	\includegraphics[width=3.5in]{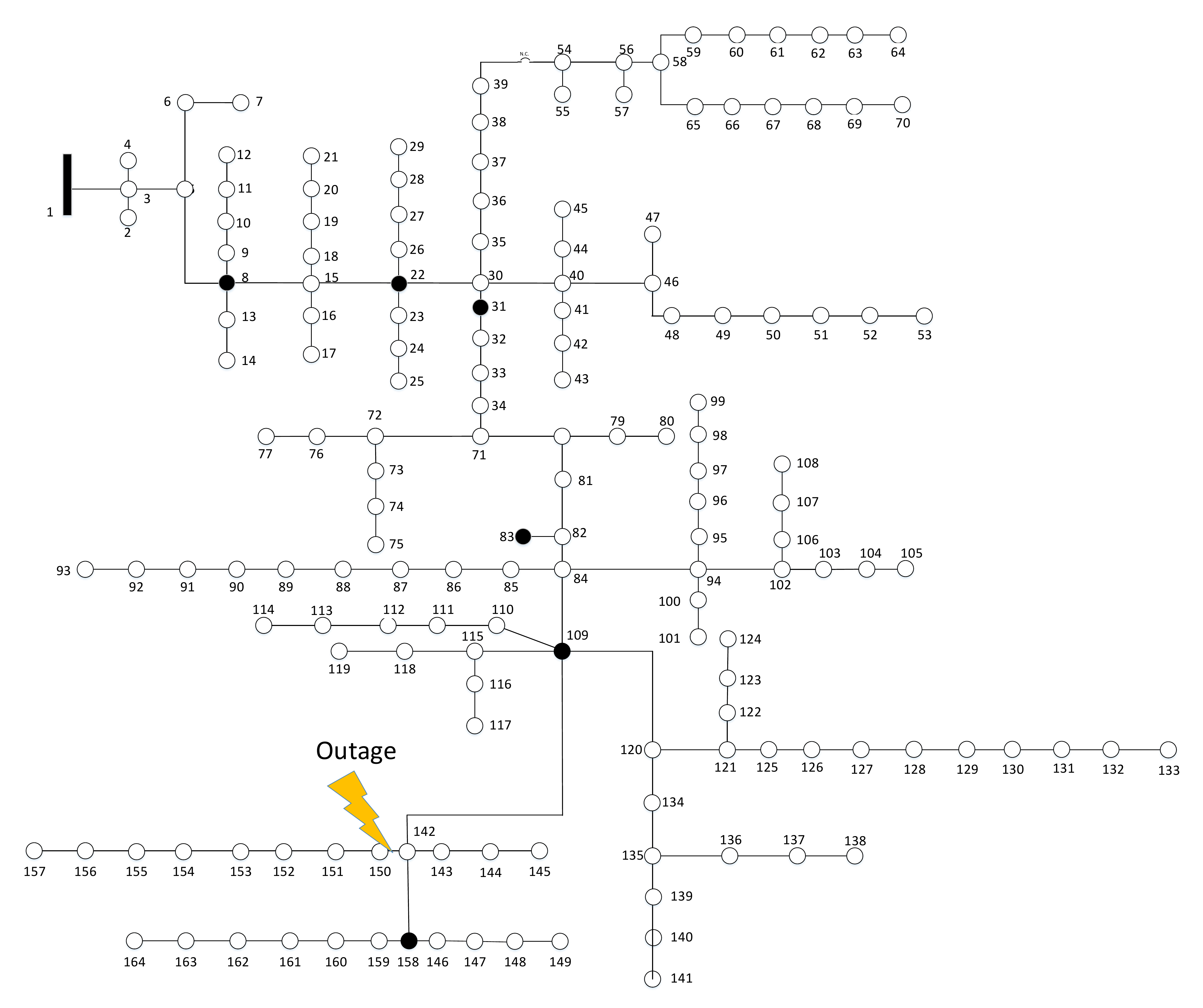}
	\caption{164-node feeder topology.}
	\label{fig:case}
	\vspace{-1em}
\end{figure}
\begin{figure}[tbp]
	\centering
	\includegraphics[width=3.5in]{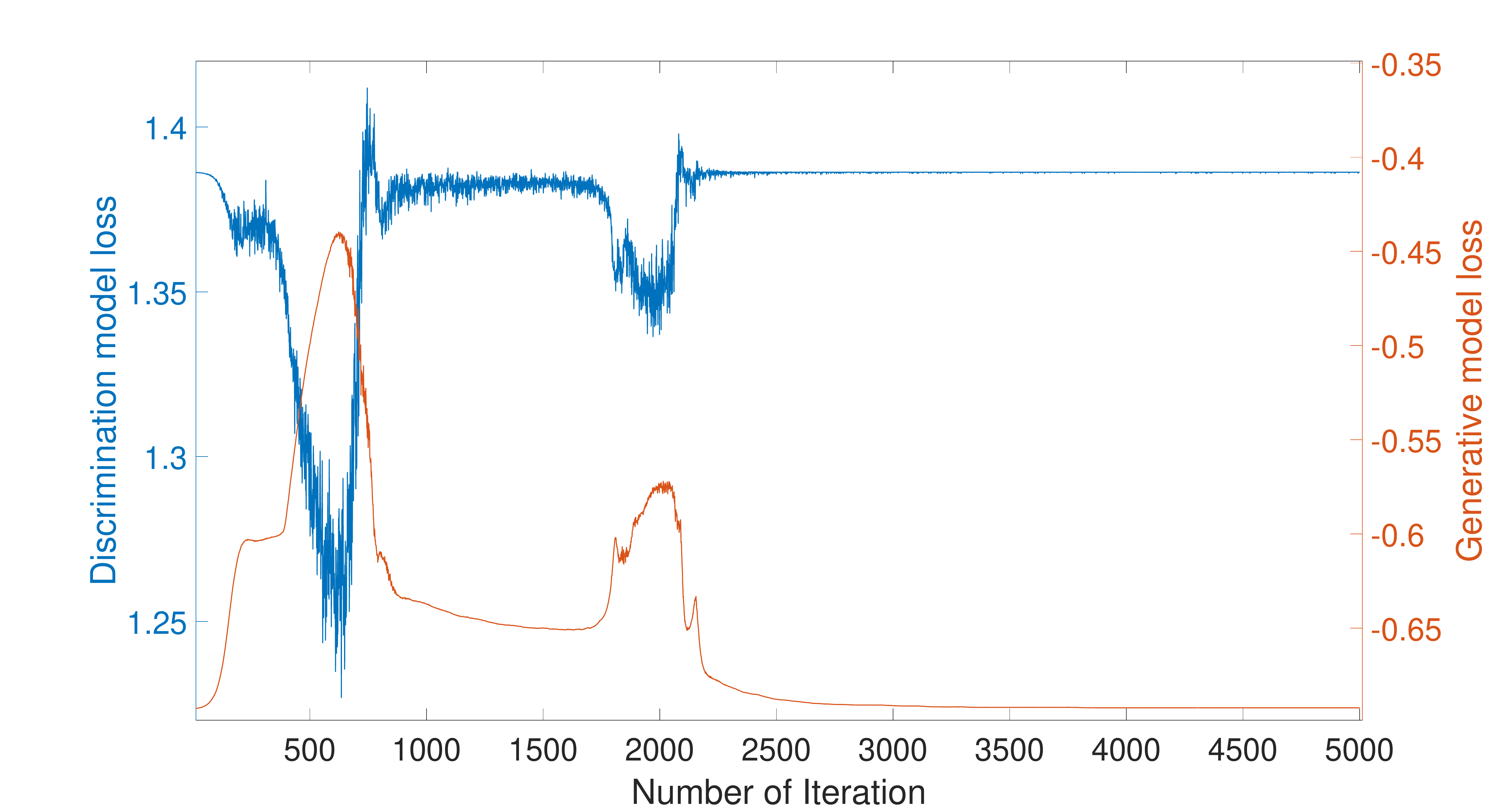}
	\caption{Training result for a GAN model.}
	\label{fig:converge}
	\vspace{-1em}
\end{figure}
\begin{theorem}\label{th_1}
For any partially observable network, the proposed BFS-based zone selection algorithm can find the optimal zone set that maximizes the outage detection entropy. 
\end{theorem}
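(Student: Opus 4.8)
The plan is to reduce the theorem to a combinatorial statement about partitions of the branch set $B_g$ and then invoke monotonicity of Shannon entropy under refinement. First I would observe that, by Definition~\ref{def}, the branch coverage $Z_{\Psi_i}$ of a zone equals the set of all branches downstream of its \emph{upstream} node $\omega_1$; the downstream node $\omega_2$ fixes only the measured quantity $\Delta V$ and not which branches the zone contains. Hence the indicator $b_j \in \Psi_i$ depends on $\Psi_i$ only through $\omega_1(\Psi_i)$, and the coverage signature $\gamma^g(b_j)$ is determined by the set of distinct upstream nodes $W := \{\omega_1(\Psi_i): \Psi_i \in \Psi^g\}$ through the collection $\{w \in W: b_j \text{ is downstream of } w\}$. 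Two branches therefore fall in the same undetectable set $u_k$ exactly when they agree on this collection, so the partition $U(\Psi^g)$, and with it the entropy $H(U(\Psi^g))$, is a function of $W$ alone; in particular several zones sharing a common $\omega_1$ are redundant for the entropy.

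Second, I would prove a monotonicity lemma: if two valid zone sets have upstream-node sets satisfying $W \subseteq W'$, then the partition induced by $W'$ refines the one induced by $W$, so its entropy is at least as large. Refinement is immediate from the previous step, since enlarging $W$ can only split undetectable sets and never merges them. The entropy inequality then follows from a one-line bound: when a block of relative size $p$ is split into sub-blocks of relative sizes $p_j$ with $\sum_j p_j = p$, one has $-\sum_j p_j \log p_j \ge -(\sum_j p_j)\log p = -p\log p$ because each $p_j \le p$; summing over all split blocks gives $H \le H'$.

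Third, I would identify the largest attainable $W$. Because $\omega_1$ must be an observable node that possesses a downstream observable node (otherwise no admissible $\omega_2$ exists), every valid zone set obeys $W \subseteq W^\star$, where $W^\star$ is the set of observable nodes having at least one observable node downstream of them. It then remains to check that the BFS procedure of Section~\ref{framework} attains $W = W^\star$ for every outcome of its random tie-breaking: the traversal eventually places every observable node into $S_B$, and each selected node with a non-empty set $S_N$ of immediate downstream observable nodes is assigned as the $\omega_1$ of a new zone in Step~IV; since $S_N \ne \emptyset$ holds exactly when the node has some downstream observable node, the upstream-node set produced is precisely $W^\star$. Combining this with the monotonicity lemma shows that the BFS zone set realizes the finest admissible partition and hence the maximum entropy over all valid zone sets, which is the assertion.

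The conceptual crux — and the step I expect to be the main obstacle — is the first reduction: recognizing that $H(U(\Psi^g))$ depends only on the set of distinct upstream nodes, and not on the number of zones, their ordering, or the choice of each $\omega_2$. Once this is secured the optimization collapses to maximizing $W$ subject to observability, after which both the monotonicity bound and the verification that BFS saturates $W^\star$ are routine. A secondary point needing care is confirming that ``has a downstream observable node'' and ``has an immediate downstream observable node'' coincide as admissibility criteria, so that BFS — which pairs each node only with its immediate observable descendants — never omits a node that a competing selection could legitimately use as an upstream node.
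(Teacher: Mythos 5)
Your proof is correct, and it takes a genuinely different---and more global---route than the paper's. The paper argues by local perturbation around the BFS output $\Psi^g$: it shows (i) that removing any single zone $\Psi_j$ merges the two undetectable sets $u_{j-1}=\Psi_{j-1}\setminus\Psi_j$ and $u_j=\Psi_j\setminus\Psi_{j+1}$, computing the resulting entropy drop explicitly, and (ii) that any zone one might add is already duplicated inside $\Psi^g$---which is exactly your observation that a zone's branch coverage depends only on $\omega_1$ and that BFS uses every eligible observable node as an $\omega_1$---so additions leave $U(\Psi^g)$ unchanged. Your argument shares that same structural crux (coverage determined by $\omega_1$ alone, plus saturation $W=W^\star$), but instead of perturbing $\Psi^g$ one zone at a time, you compare it directly against an \emph{arbitrary} admissible zone set via the refinement-monotonicity lemma together with the bound $W\subseteq W^\star$. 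This is a real gain in rigor: strictly speaking, the paper's add/remove analysis only establishes a form of local optimality, and promoting it to global optimality implicitly needs precisely the monotonicity-under-refinement fact you prove, so that $H(U(\Psi'))\le H(U(\Psi'\cup\Psi^g))=H(U(\Psi^g))$ for any competitor $\Psi'$; your write-up closes that gap, and also handles cleanly the corner case (observable nodes with no observable descendants, which BFS never uses as $S_{o1}$) that the paper's claim ``all observable nodes are used as $S_{o1}$'' glosses over. What the paper's route buys in exchange is a quantitative by-product: the explicit entropy loss $\frac{1}{M}\log\frac{(|u_{j-1}|+|u_j|)^{|u_{j-1}|+|u_j|}}{|u_{j-1}|^{|u_{j-1}|}\,|u_j|^{|u_j|}}$ incurred by deleting a zone, which shows every BFS zone is non-redundant---a fact the paper reuses in its redundancy/bad-data discussion and which your proof does not produce.
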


\begin{proof}
To prove the theorem, we will show that addition or removal of a zone to $\Psi^g$ cannot increase the entropy. Thus, deviation from $\Psi^g$ cannot obtain additional outage location information. First, consider the case of removing an arbitrary zone $\Psi_j\in \Psi^g$, and without loss of generality assume that $\Psi_{j-1}\in \Psi^g$ and $\Psi_{j+1}\in \Psi^g$ are the smallest and largest zones, respectively, where $\Psi_{j-1}\supset \Psi_j \supset \Psi_{j+1}$ holds. Here, two undetectable branch sets can be identified: $u_{j-1} = \Psi_{j-1}\setminus \Psi_j$ and $u_j = \Psi_j\setminus \Psi_{j+1}$. Note that $\Psi_j$ is the only zone that enables discrimination between branches $u_j$ and $u_{j-1}$. Hence, if $\Psi_j$ is removed, $u_j$ will be eliminated from $U(\Psi^g)$, and $u_{j-1} \gets u_{j-1} \cup u_j$. This leads to a decrease in entropy, $H(U(\Psi^g))$, equal to $\frac{1}{M}\log\frac{(|u_{j-1}| + |u_j|)^{|u_{j-1}| + |u_j|}}{|u_{j-1}|^{|u_{j-1}|}|u_{j}|^{|u_{j}|}}$. This decrease shows that removal of any zone in $\Psi^g$ will reduce the amount of outage location information. Now consider the case of adding a zone to $\Psi^g$: assume that the newly added zone, $\Psi_j$, is defined by two observable nodes $S_{o1}\in S_g$ and $S_{o2}\in S_g$; however, the proposed algorithm has already utilized all the observable nodes in $S_g$ as $S_{o1}$, shown in Step II (Section \ref{framework}); this means that there is at least one zone in $\Psi^g$ that is identical to $\Psi_j$. Hence, adding a zone to the set $\Psi^g$ will not change $U(\Psi^g)$ and the entropy remains unchanged.
\end{proof}
\begin{figure}[tbp]
	\centering
	\includegraphics[width=3.5in]{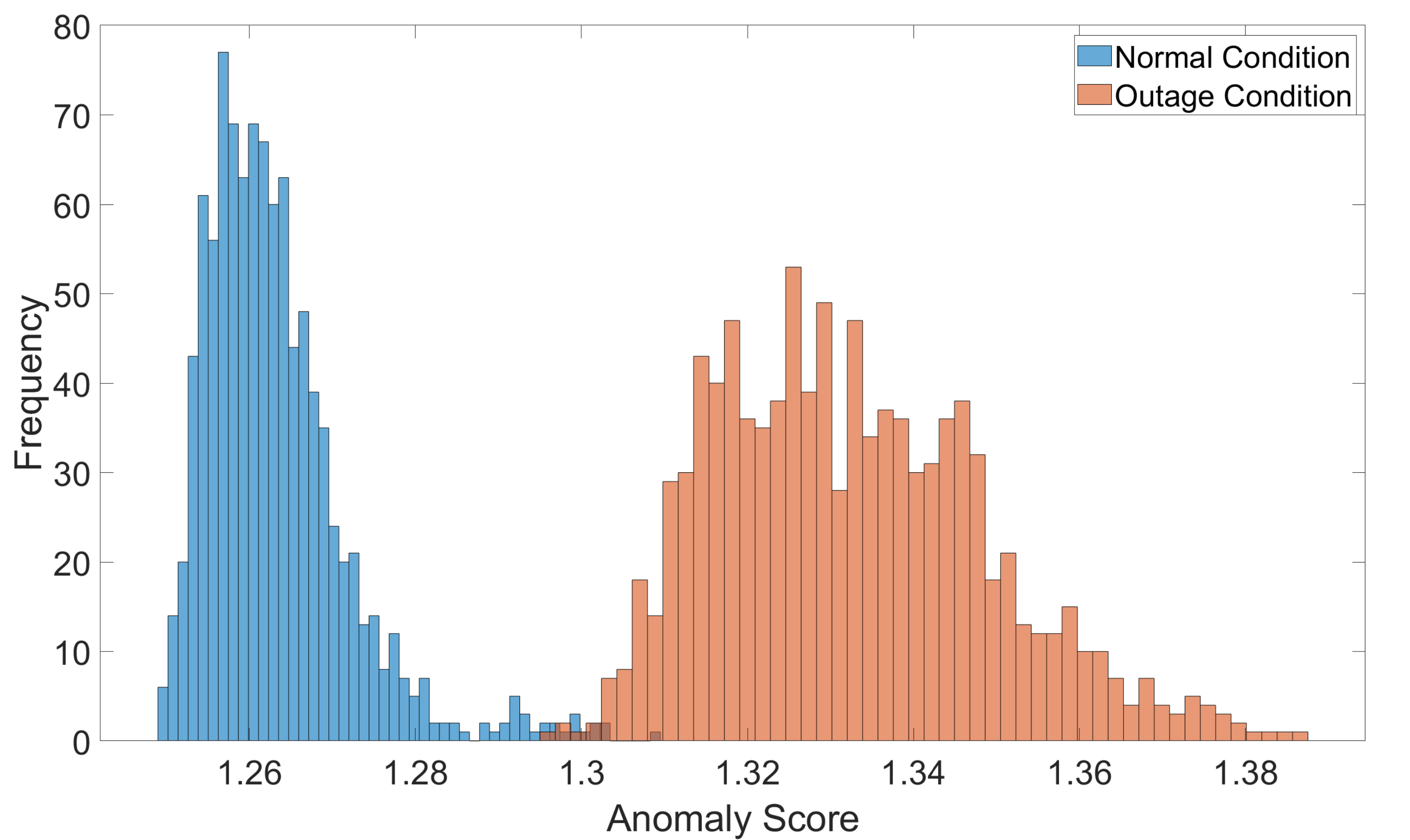}
	\caption{Anomaly score histogram under the normal and outage conditions.}
	\label{fig:case2}
	\vspace{-1em}	
\end{figure}
\begin{figure}[tbp]
	\centering
	\includegraphics[width=3.5in]{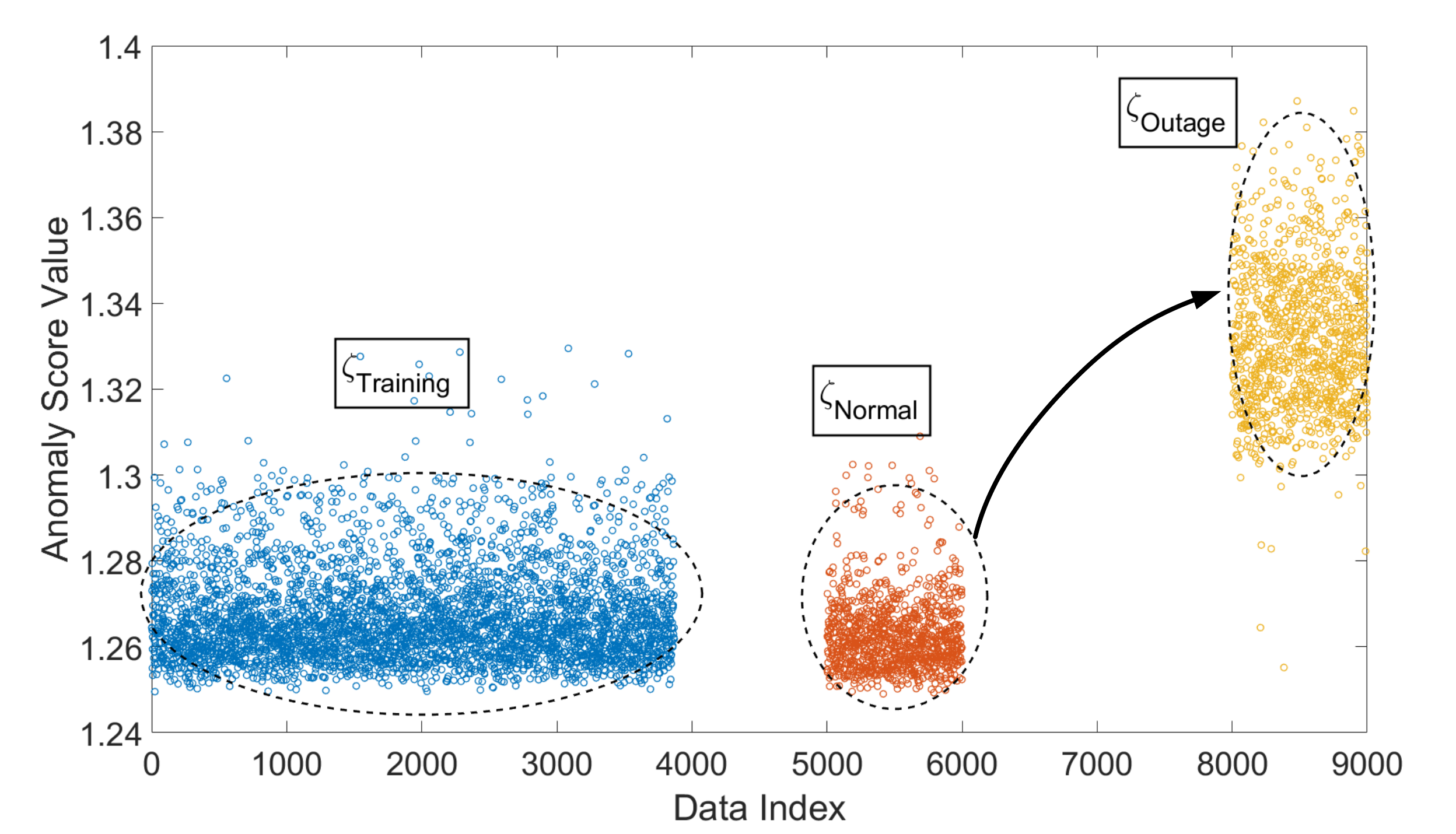}
	\caption{Anomaly score of the training set, with respect to the normal/outage test set.}
	\label{fig:case3}
	\vspace{-1em}	
\end{figure}
\textit{Framework Property 3 - Robustness Against Bad Data Samples:} Bad AMI data samples could generate high anomaly scores, which can lead to misclassification of bad data as outage event. Hence, it is essential to block these data samples from the outage detection algorithm. To do this, we have integrated a bad data detection mechanism into the algorithm by taking advantage of existing redundancy of the zones in $\Psi^g$. The basic idea is that since bad measurement data are not actually generated by outage events, it is highly unlikely to cause deviations in anomaly scores assigned to several intersecting zones at the same time, given that intersecting zones do not share the data from the same measurement devices. To introduce robustness against bad data, a set of redundant zones is selected for $\Psi_a$, Stage IV (Section \ref{GAN}). This set consists of the zones with lower topological order than $\Psi_a$, and is denoted as $\Psi^R = \{\Psi_{r_1},...,\Psi_{r_n}\}$, where $\Psi_a \subset \Psi_{r_i},\ \forall \Psi_{r_i}\in \Psi^R$. If $\exists \Psi_{r_i}$ such that $\zeta_{\Psi_{r_i}}\leq\mu_{\Psi_{r_i}} + h\cdot\sigma_{\Psi_{r_i}}$ then the outage in $\Psi_a$ is dismissed as bad data. The number of redundant zones $|\Psi^R|$ depends on the desired reliability of the algorithm against bad data. If the probability of receiving an anomaly due to bad data for each zone is $\eta$, then the probability of misclassifying a case of bad data as outage decreases with $\eta^{|\Psi^R|}$.   

%Our algorithm is able to combine the ``last gasp'' signals received from the observed nodes with the GAN-based anomaly scores to improve the outage detection performance. When the utilities receive the signals from several nodes, it indicates the outage events should happen in the zones that contains these nodes without anomaly score test. On the other hand, the signals can determine the outage events occur in the main feeder or the downstream area. When the outage happens in the upstream main feeder, all anomaly scores of downstream zones would be higher than the normal status and the signals should be also observed from the corresponding nodes. At the same time, when communication failure events occur, the measurement data and ``last gasp'' signals are unavailable in the observed nodes and the GAN-based anomaly detector would not test for this time \cite{}. Hence, utilities can discover different outage situations by utilizing the proposed method and SMs' ``last gasp'' signals. 

\section{Numerical Results}\label{result}
The proposed outage detection method is tested on several real distribution feeders with AMI. The topology of one of these networks is shown in Fig. \ref{fig:case}. This 164-node feeder consists of residential (93$\%$) and commercial (7$\%$) customers. Six observable nodes are assumed in this feeder (node 8, node 22, node 31, node 83, node 109, and node 158), where five zones are defined based on these nodes. These zones are denoted $\{\Psi_1,...,\Psi_5\}$ and include branches downstream of node 8, node 22, node 31, node 83, and node 109, respectively. Note that $\Psi_1 \succ \Psi_2 \succ ... \succ \Psi_5$.

\begin{figure}[tbp]
	\centering
	\includegraphics[width=3.5in]{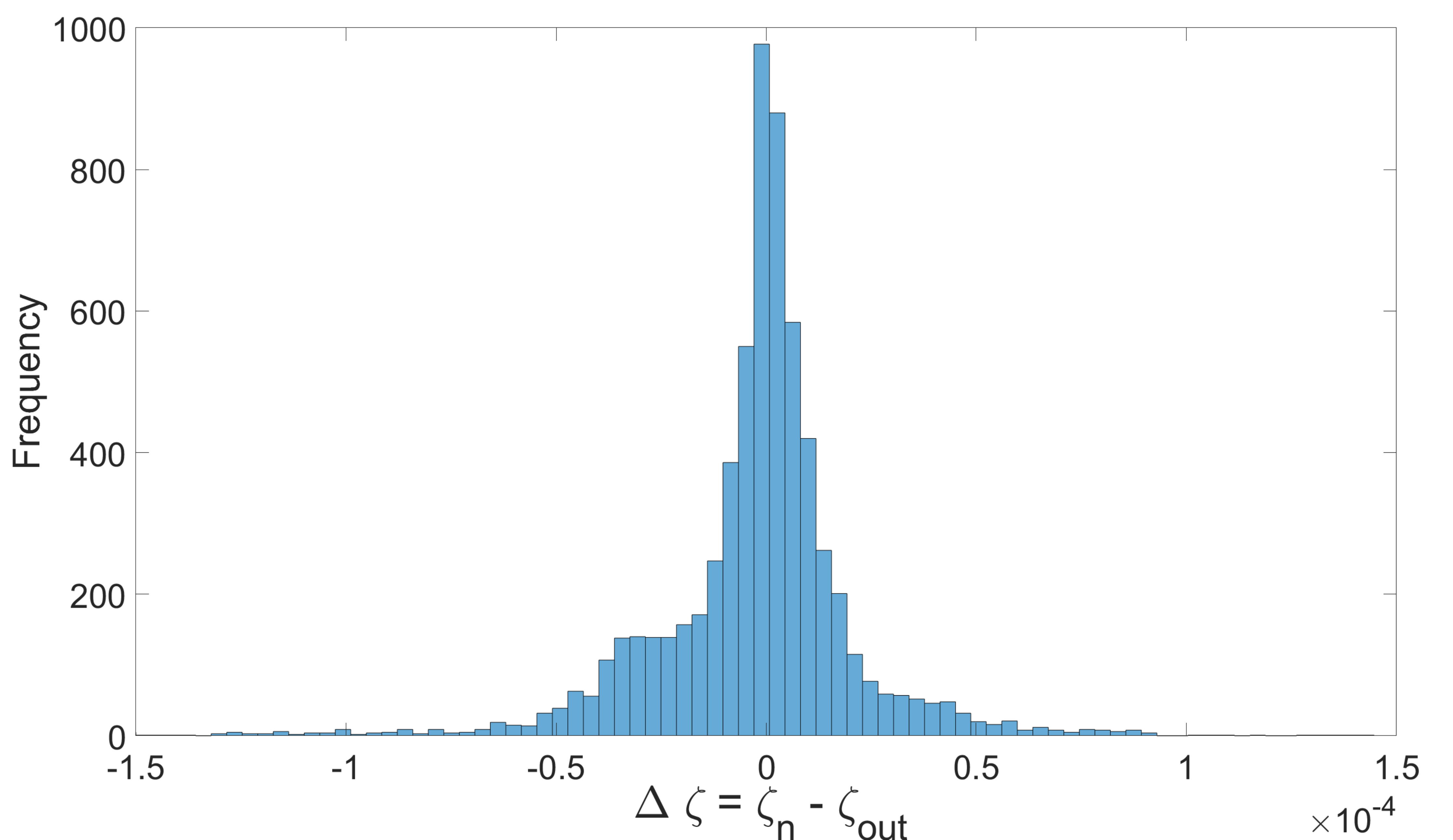}
	\caption{The histogram of $\Delta \zeta$.}
	\label{fig:case4}
	\vspace{-1em}
\end{figure}
\subsection{Performance of GAN Model}
To validate the performance of GAN training process, we calculate the loss values of $G$ and $D$ that states the model has converged to the Nash equilibrium or not. According to the theoretical analysis in \cite{GAN2014}, when the Jensen-Shannon divergence between the $G$ model's distribution and the data distribution is zero, the loss values of $G$ and $D$ should converge to $2\log(2)$ and $\log(\frac{1}{2})$ at the equilibrium, respectively. This has been confirmed in Fig. \ref{fig:converge}. After a number of training iterations, both $D$ and $G$ losses converge to the desired values and these indicate that the GAN has been trained successfully and the underlying joint data distribution in normal condition has been learned. 

\subsection{Performance of Outage Detection}
The performance of the GAN-based outage detection method is tested for different outage cases. The outage event is located between node $142$ and node $164$, as shown in Fig. \ref{fig:case}; three outage events are simulated with three different outage magnitudes to evaluate the performance of the proposed method. The first case is designed as a small-size event where around 20 customers are disconnected ($40 kW$ total average demand). The second case is designed to represent a middle-size event, where around 50 customers are impacted  ($100 kW$ total average demand).

The third case is a large-size event, with around 80 customers ($150 kW$ total average demand). For each case, GAN models are trained using the historical data of the five zones. Fig. \ref{fig:case2} presents the histogram of anomaly score for one zone under normal and outage conditions. The mean values of $\zeta$ are $1.263$ and $1.33$ in the normal and outage conditions with variance values $7.7\times10^{-5}$ and $2.7\times10^{-4}$, respectively. Based on Fig. \ref{fig:case2}, the difference between anomaly score under normal and outage conditions is large enough to enable DSOs to distinguish these conditions. Meanwhile, Fig. \ref{fig:case3} presents the consistency of anomaly score for training and test sets when the system is in normal conditions. However, when the outage event takes place in the zone, the real-time anomaly score reaches considerably higher values. 

It is critical to show that an outage event \textit{outside} a zone will not lead to abnormal anomaly scores for that zone. Fig. \ref{fig:case} shows the distribution of anomaly score changes for one zone, when the outages of different magnitudes happen outside the zone. Hence, this figure depicts the histogram of $\Delta \zeta = \zeta_{n} - \zeta_{out}$, where $\zeta_{n}$ is the anomaly score obtained in normal conditions and $\zeta_{out}$ is the anomaly score obtained when the outage happens outside the zone. As can be observed, the anomaly score assigned to the zone does not change and remains almost constant for these outside-zone outages, which indicates that the anomaly score can be relied upon to correctly distinguish the outages inside and outside the zone.

To evaluate the quality of outage detection performance of the proposed method for a multi-zone network, several statistical metrics are applied, such as accuracy (Accu), precision (Prec), recall, $F_1$ score, and the Area under the  Curve (AUC) \cite{roc2006}. The values of these indexes are presented in Table. \ref{table:1.1} for the three outage cases and different zones. Based on the results, we can conclude that the performance of the proposed outage detection method improves as the event size increases, due to higher levels of deviation from normal joint measurement data distribution. For medium and large outage cases, all indexes reach values over 0.9. Hence, based on this AMI dataset and the test feeders, the proposed method can accurately detect outage events in the partially observable systems.

\begin{table}
\centering
\setlength{\tabcolsep}{2.2mm}
\renewcommand\arraystretch{1.4}
\caption{Outage Detection Quality Analysis}
\begin{tabular}{ccccccc}
\hline\hline
Zone & Case & Accu & Recall & Prec & $F_1$ & AUC\\
\hline
\multirow{3}*{$\Psi_1$} & {\centering} case 1 & 0.752 & 0.645 & 0.8206  & 0.7223 & 0.7641\\
%\cline{2-8}
~ & case 2 & 0.913 & 0.967 & 0.8727 & 0.9175 & 0.9179 \\
%\cline{2-8}
~ & case 3 & 0.928 & 0.9970 & 0.8761 & 0.9326 & 0.9363\\
\hline
\multirow{3}*{$\Psi_2$} & case 1 & 0.8355 & 0.784 & 0.874 & 0.8266 & 0.8391\\
%\cline{2-8}
~ & case 2 & 0.9435 & 1 & 0.8985 & 0.9465 & 0.9492\\
%\cline{2-8}
~ & case 3 & 0.9435 & 1 & 0.8985 & 0.9465 & 0.9492\\
\hline
\multirow{3}*{$\Psi_3$} & case 1 & 0.673 & 0.506 & 0.7685 & 0.6074 & 0.6947\\
%\cline{2-8}
~ & case 2 & 0.912 & 0.984 & 0.8601 & 0.9179 & 0.9207\\
%\cline{2-8}
~ & case 3 & 0.914 & 0.988 & 0.8606 & 0.9199 & 0.9233\\
\hline
\multirow{3}*{$\Psi_4$} & case 1 & 0.9225 & 0.884 & 0.964 & 0.9223 & 0.9285 \\
%\cline{2-8}
~ & case 2 & 0.953 & 0.939 & 0.966 & 0.9523 & 0.9534\\
%\cline{2-8}
~ & case 3 & 0.981 & 0.995 & 0.968 & 0.9813 & 0.9812\\
\hline
\multirow{3}*{$\Psi_5$} & case 1 & 0.834 & 0.738 & 0.9134 & 0.8164 & 0.8468\\
%\cline{2-8}
~ & case 2 & 0.9605 & 0.991 & 0.934 & 0.9617 & 0.962\\
%\cline{2-8}
~ & case 3 & 0.965 & 1 & 0.9346 & 0.9662 & 0.9673\\
\hline\hline
\end{tabular}
\label{table:1.1}
\vspace{-1em}
\end{table}

\section{Conclusion}\label{conclusion}
In this paper, we have presented a new data-driven method to detect and locate outage events in partially observable grids using SM measurements. The proposed GAN-based approach is able to implicitly represent the distribution of data in normal conditions and determine potential outage events online. The developed multi-zone outage detection mechanism is based on an unsupervised learning approach, which can address several challenges in outage detection: 1) the poor observability of system caused by the limited number of SMs. 2) data imbalance problem caused by outage data scarcity. 3) the high-dimensionality of the data caused by the temporal-spatial relationship. Meanwhile, our proposed robust BFS-based zone selection and ordering mechanism is guaranteed to capture the maximum amount of information on outage location for any given partially observable system. This method is validated on a real utility feeder using real SM data.

\ifCLASSOPTIONcaptionsoff
  \newpage
\fi

% trigger a \newpage just before the given reference
% number - used to balance the columns on the last page
% adjust value as needed - may need to be readjusted if
% the document is modified later
%\IEEEtriggeratref{8}
% The "triggered" command can be changed if desired:
%\IEEEtriggercmd{\enlargethispage{-5in}}

% references section

% can use a bibliography generated by BibTeX as a .bbl file
% BibTeX documentation can be easily obtained at:
% http://www.ctan.org/tex-archive/biblio/bibtex/contrib/doc/
% The IEEEtran BibTeX style support page is at:
% http://www.michaelshell.org/tex/ieeetran/bibtex/
\bibliographystyle{IEEEtran}
% argument is your BibTeX string definitions and bibliography database(s)
\bibliography{IEEEabrv,./bibtex/bib/IEEEexample}

% Generated by IEEEtran.bst, version: 1.14 (2015/08/26)
\begin{thebibliography}{10}
\providecommand{\url}[1]{#1}
\csname url@samestyle\endcsname
\providecommand{\newblock}{\relax}
\providecommand{\bibinfo}[2]{#2}
\providecommand{\BIBentrySTDinterwordspacing}{\spaceskip=0pt\relax}
\providecommand{\BIBentryALTinterwordstretchfactor}{4}
\providecommand{\BIBentryALTinterwordspacing}{\spaceskip=\fontdimen2\font plus
\BIBentryALTinterwordstretchfactor\fontdimen3\font minus
  \fontdimen4\font\relax}
\providecommand{\BIBforeignlanguage}[2]{{%
\expandafter\ifx\csname l@#1\endcsname\relax
\typeout{** WARNING: IEEEtran.bst: No hyphenation pattern has been}%
\typeout{** loaded for the language `#1'. Using the pattern for}%
\typeout{** the default language instead.}%
\else
\language=\csname l@#1\endcsname
\fi
#2}}
\providecommand{\BIBdecl}{\relax}
\BIBdecl

\bibitem{Hidalgo2010}
R.~Hidalgo, C.~Abbey, and G.~Joós, ``A review of active distribution networks
  enabling technologies,'' \emph{In IEEE Power and Energy Society General
  Meeting}, pp. 1--9, Jul. 2010.

\bibitem{Mohassel2014}
R.~R. Mohassel, A.~Fung, F.~Mohammadi, and K.~Raahemifar, ``A survey on
  advanced metering infrastructure,'' \emph{International Journal of Electrical
  Power \& Energy Systems}, vol.~63, pp. 473--484, Dec. 2014.

\bibitem{Bhela}
S.~Bhela, V.~Kekatos, and S.~Veeramachaneni, ``Enhancing observability in
  distribution grids using smart meter data,'' \emph{To appear in {IEEE} Trans.
  Smart Grid}.

\bibitem{Miranda2012}
V.~Miranda, J.~Krstulovic, H.~Keko, C.~Moreira, and J.~Pereira,
  ``Reconstructing missing data in state estimation with autoencoders,''
  \emph{{IEEE} Trans. Power Syst.}, vol.~27, no.~2, pp. 604--611, May 2012.

\bibitem{Primadianto2017}
A.~Primadianto and C.~N. Lu, ``A review on distribution system state
  estimation,'' \emph{{IEEE} Trans. Power Syst.}, vol.~32, no.~5, pp.
  3875--3883, Sep. 2017.

\bibitem{Alimardani2015}
A.~Alimardani, F.~Therrien, D.~Atanackovic, J.~Jatskevich, and E.~Vaahedi,
  ``Distribution system state estimation based on nonsynchronized smart
  meters,'' \emph{{IEEE} Trans. Smart Grid}, vol.~6, no.~6, pp. 2919--2928,
  Nov. 2015.

\bibitem{Arefi2015}
A.~Arefi, G.~Ledwich, and B.~Behi, ``An efficient {DSE} using conditional
  multivariate complex {Gaussian} distribution,'' \emph{{IEEE} Trans. Smart
  Grid}, vol.~6, no.~4, pp. 2147--2156, Jul. 2015.

\bibitem{Ghosh1997}
A.~K. Ghosh, D.~L. Lubkeman, M.~J. Downey, and R.~H. Jones, ``Distribution
  circuit state estimation using a probabilistic approach,'' \emph{{IEEE}
  Trans. Power Syst.}, vol.~12, no.~1, pp. 45--51, Feb. 1997.

\bibitem{Singh2009}
R.~Singh, B.~C. Pal, and R.~A. Jabr, ``Distribution system state estimation
  through {Gaussian} mixture model of the load as pseudo-measurement,''
  \emph{{IET} Gener. Transm. Distrib.}, vol.~4, no.~1, pp. 50--59, Jan. 2009.

\bibitem{Singh2010}
------, ``Statistical representation of distribution system loads using
  {Gaussian} mixture model,'' \emph{{IEEE} Trans. Power Syst.}, vol.~25, no.~1,
  pp. 29--37, Feb. 2010.

\bibitem{Ghahrooei}
Y.~R. Ghahrooei, A.~Khodabakhshian, and R.~A. Hooshmand, ``A new pseudo load
  profile determination approach in low voltage distribution networks,''
  \emph{to appear in {IEEE} Trans. Power Syst.}

\bibitem{Angioni2016-2}
A.~Angioni, T.~Schlösser, F.~Ponci, and A.~Monti, ``Impact of
  pseudo-measurements from new power profiles on state estimation in
  low-voltage grids,'' \emph{{IEEE} Trans. Instrum. Meas.}, vol.~65, no.~1, pp.
  70--77, Jan. 2016.

\bibitem{Nguyen2015}
D.~T. Nguyen, ``Modeling load uncertainty in distribution network monitoring,''
  \emph{{IEEE} Trans. Power Syst.}, vol.~30, no.~5, pp. 2321--2328, Sep. 2015.

\bibitem{Chen2016}
Q.~Chen, D.~Kaleshi, Z.~Fan, and S.~Armour, ``Impact of smart metering data
  aggregation on distribution system state estimation,'' \emph{{IEEE} Trans.
  Ind. Informat.}, vol.~12, no.~4, pp. 1426--1437, Aug. 2016.

\bibitem{Gerbec2005}
D.~Gerbec, S.~Gasperic, I.~Smon, and F.~Gubina, ``Allocation of the load
  profiles to consumers using probabilistic neural networks,'' \emph{{IEEE}
  Trans. Power Syst.}, vol.~20, no.~2, pp. 548--555, May 2005.

\bibitem{Manitsas2012}
E.~Manitsas, R.~Singh, B.~C. Pal, and G.~Strbac, ``Distribution system state
  estimation using an artificial neural network approach for pseudo measurement
  modeling,'' \emph{{IEEE} Trans. Power Syst.}, vol.~27, no.~4, pp. 1888--1896,
  Nov. 2012.

\bibitem{Wu2013}
J.~Wu, Y.~He, and N.~Jenkins, ``A robust state estimator for medium voltage
  distribution networks,'' \emph{{IEEE} Trans. Power Syst.}, vol.~28, no.~2,
  pp. 1008--1016, May 2013.

\bibitem{Hayes2015}
B.~P. Hayes, J.~K. Gruber, and M.~Prodanovic, ``A closed-loop state estimation
  tool for {MV} network monitoring and operation,'' \emph{{IEEE} Trans. Smart
  Grid}, vol.~6, no.~4, pp. 2116--2125, Jul. 2015.

\bibitem{Tipping2001}
M.~E. Tipping, ``Sparse {Bayesian} learning and the relevance vector machine,''
  \emph{The Journal of Machine Learning Research}, vol.~1, pp. 211--244, 2001.

\bibitem{Maschler2013}
M.~Maschler, E.~Solan, and S.~Zamir, \emph{Game theory}.\hskip 1em plus 0.5em
  minus 0.4em\relax New York: Cambridge University Press, 2013.

\bibitem{Bianchi2006}
N.~Cesa-Bianchi and G.~Lugosi, \emph{Prediction, learning, and games}.\hskip
  1em plus 0.5em minus 0.4em\relax New York: Cambridge University Press, 2006.

\bibitem{Friedman2001}
J.~Friedman, T.~Hastie, and R.~Tibshirani, \emph{The elements of statistical
  learning. New York:}.\hskip 1em plus 0.5em minus 0.4em\relax New York:
  Springer series in statistics, 2001.

\bibitem{Luan2015}
W.~Luan, J.~Peng, M.~Maras, J.~Lo, and B.~Harapnuk, ``Smart meter data
  analytics for distribution network connectivity verification,'' \emph{{IEEE}
  Trans. Smart Grid}, vol.~6, no.~4, pp. 1964--1971, Jul. 2015.

\bibitem{Baran1995}
M.~E. Baran and A.~M. Kelley, ``A branch-current-based state estimation method
  for distribution systems,'' \emph{{IEEE} Trans. Power Syst.}, vol.~10, no.~1,
  pp. 483--491, Feb. 1995.

\bibitem{Wang2004}
H.~Wang and N.~N. Schulz, ``A revised branch current-based distribution system
  state estimation algorithm and meter placement impact,'' \emph{{IEEE} Trans.
  Power Syst.}, vol.~19, no.~1, pp. 207--213, Feb. 2004.

\end{thebibliography}


% Generated by IEEEtran.bst, version: 1.14 (2015/08/26)
\begin{thebibliography}{10}
\providecommand{\url}[1]{#1}
\csname url@samestyle\endcsname
\providecommand{\newblock}{\relax}
\providecommand{\bibinfo}[2]{#2}
\providecommand{\BIBentrySTDinterwordspacing}{\spaceskip=0pt\relax}
\providecommand{\BIBentryALTinterwordstretchfactor}{4}
\providecommand{\BIBentryALTinterwordspacing}{\spaceskip=\fontdimen2\font plus
\BIBentryALTinterwordstretchfactor\fontdimen3\font minus
  \fontdimen4\font\relax}
\providecommand{\BIBforeignlanguage}[2]{{%
\expandafter\ifx\csname l@#1\endcsname\relax
\typeout{** WARNING: IEEEtran.bst: No hyphenation pattern has been}%
\typeout{** loaded for the language `#1'. Using the pattern for}%
\typeout{** the default language instead.}%
\else
\language=\csname l@#1\endcsname
\fi
#2}}
\providecommand{\BIBdecl}{\relax}
\BIBdecl

\bibitem{EIAoutage}
\BIBentryALTinterwordspacing
{Energy Information Administration}. (2018, Apr.) Average frequency and
  duration of electric distribution outages vary by states. [Online].
  Available: \url{https://www.eia.gov/todayinenergy/detail.php?id=35652}
\BIBentrySTDinterwordspacing

\bibitem{RA2018}
R.~A. Sevlian, Y.~Zhao, R.~Rajagopal, A.~Goldsmith, and H.~V. Poor, ``Outage
  detection using load and line flow measurements in power distribution
  systems,'' \emph{IEEE Trans. Power Syst.}, vol.~33, no.~2, pp. 2053--2069,
  Mar. 2018.

\bibitem{ST2018}
T.~A. Short, \emph{Electric power distribution handbook}.\hskip 1em plus 0.5em
  minus 0.4em\relax CRC press, 2018.

\bibitem{HSF2016}
H.~Sun, Z.~Wang, J.~Wang, Z.~Huang, N.~Carrington, and J.~Liao, ``Data-driven
  power outage detection by social sensors,'' \emph{IEEE Trans. Smart Grid},
  vol.~7, no.~5, pp. 2516--2524, Sep. 2016.

\bibitem{FCL2014}
F.~C.~L. Trindade, W.~Freitas, and J.~C.~M. Vieira, ``Fault location in
  distribution systems based on smart feeder meters,'' \emph{IEEE Trans. Power
  Deli.}, vol.~29, no.~1, pp. 251--260, Feb. 2014.

\bibitem{ZSH2018}
Z.~S. Hosseini, M.~Mahoor, and A.~Khodaei, ``Ami-enabled distribution network
  line outage identification via multi-label svm,'' \emph{IEEE Trans. Smart
  Grid}, vol.~9, no.~5, pp. 5470--5472, Sep. 2018.

\bibitem{RM2018}
R.~{Moghaddass} and J.~{Wang}, ``A hierarchical framework for smart grid
  anomaly detection using large-scale smart meter data,'' \emph{{IEEE} Trans.
  Smart Grid}, vol.~9, no.~6, pp. 5820--5830, Nov. 2018.

\bibitem{SJC2015}
S.~J. Chen, T.~S. Zhan, C.~H. Huang, J.~L. Chen, and C.~H. Lin, ``Nontechnical
  loss and outage detection using fractional-order self synchronization
  error-based fuzzy petri nets in micro-distribution systems,'' \emph{IEEE
  Trans. Smart Grid}, vol.~6, no.~1, pp. 411--420, Jan. 2015.

\bibitem{KS2001}
K.~Sridharan and N.~N. Schulz, ``Outage management through amr systems using an
  intelligent data filter,'' \emph{IEEE Trans. Power Deli.}, vol.~16, no.~4,
  pp. 669--675, Oct. 2001.

\bibitem{RA2001}
R.~A. Fische, A.~S. Laakonen, and N.~N. Schulz, ``A general polling algorithm
  using a wireless amr system for restoration confirmation,'' \emph{IEEE Trans.
  Power Syst.}, vol.~16, no.~2, pp. 312--316, May 2001.

\bibitem{PK2014}
P.~{Kankanala}, S.~{Das}, and A.~{Pahwa}, ``Adaboost$^{+}$: An ensemble
  learning approach for estimating weather-related outages in distribution
  systems,'' \emph{{IEEE} Trans. Power Syst.}, vol.~29, no.~1, pp. 359--367,
  Jan. 2014.

\bibitem{VC2007}
V.~Chandola, A.~Banerjee, and V.~Kumar, ``Anomaly detection: A survey,''
  \emph{Technical Report, University of Minnesota}, pp. 1--1, 2007.

\bibitem{GAN2014}
I.~J. Goodfellow, P.-A. Jean, M.~Mirza, B.~Xu, D.~Warde-Farley, S.~Ozair, A.~C.
  Courville, and Y.~Bengio, ``Generative adversarial nets,'' \emph{NIPS}, 2014.

\bibitem{GAN2017}
P.~S. S.~Thomas, S.~M. Waldstein, U.~Schmidt-Erfurth, and G.~Langs,
  ``Unsupervised anomaly detection with generative adversarial networks to
  guide marker discovery,'' \emph{IPMI}, 2017.

\bibitem{AC2018}
A.~Creswell, T.~White, V.~Dumoulin, K.~Arulkumaran, B.~Sengupta, and A.~A.
  Bharath, ``Generative adversarial networks: An overview,'' \emph{{IEEE}
  Signal Process.}, vol.~35, no.~1, pp. 53--65, Jan. 2018.

\bibitem{BFS}
K.~A. Beamer, Scott and D.~Patterson, ``Direction-optimizing breadth-first
  search,'' \emph{Scientific Programming 21}, no. 3-4, pp. 137--148, 2013.

\bibitem{website}
\BIBentryALTinterwordspacing
{Z. Wang}. Dr. zhaoyu wang's home page. [Online]. Available:
  \url{http://wzy.ece.iastate.edu/Testsystem.html}
\BIBentrySTDinterwordspacing

\bibitem{KW2016}
L.~Kersting, W.and~Grigsby, \emph{Distribution System Modeling and
  Analysis}.\hskip 1em plus 0.5em minus 0.4em\relax Boca Raton: CRC Press,
  2016.

\bibitem{GAN2016}
S.~Nowozin, B.~Cseke, and R.~Tomioka, ``f-gan: Training generative neural
  samplers using variational divergence minimization,'' \emph{Advances in
  neural information processing systems}, pp. 271--279, 2016.

\bibitem{kaveh2019}
K.~Dehghanpour, Y.~Yuan, Z.~Wang, and F.~Bu, ``A game-theoretic data-driven
  approach for pseudo-measurement generation in distribution system state
  estimation,'' \emph{IEEE Trans. Smart Grid}, pp. 1--1, 2019.

\bibitem{GAN2015}
E.~Denton, S.~Chintala, A.~Szlam, and R.~Fergus, ``Deep generative image models
  using a laplacian pyramid of adversarial networks,'' \emph{NIPS}, 2015.

\bibitem{YZC2018}
Y.~Chen, Y.~Wang, D.~Kirschen, and B.~Zhang, ``Model-free renewable scenario
  generation using generative adversarial networks,'' \emph{{IEEE} Trans. Power
  Syst.}, vol.~33, no.~3, pp. 3265--3275, May. 2018.

\bibitem{entropy2006}
L.~Jost, ``Entropy and diversity,'' \emph{Oikos}, vol. 113, no.~2, pp.
  363--375, 2006.

\bibitem{roc2006}
T.~Fawcett, ``An introduction to roc analysis,'' \emph{Pattern Recognition
  Letters}, vol.~27, no.~8, pp. 861--874, Jun 2006.

\end{thebibliography}
\end{document}